\newtheorem{pro}{\textit{Proposition}}
\newcommand\blfootnote[1]{%
  \begingroup
  \renewcommand\thefootnote{}\footnote{#1}%
  \addtocounter{footnote}{-1}%
  \endgroup
}
\begin{document}
\title{Subverting Massive MIMO by Smart Jamming}
\author{Hessam Pirzadeh, S. Mohammad Razavizadeh, and Emil Bj\"{o}rnson}
\maketitle
\begin{abstract}
We consider uplink transmission of a  massive multi-user multiple-input multiple-output (MU-MIMO) system in the presence of a smart jammer. The jammer aims to degrade the sum spectral efficiency of the legitimate system by attacking both the training and data transmission phases. First, we derive a closed-form expression for the sum spectral efficiency by taking into account the presence of a smart jammer. Then, we determine how a jammer with a given energy budget should attack the training and data transmission phases to induce the maximum loss to the sum spectral efficiency. Numerical results illustrate the impact of optimal jamming specifically in the large limit of the number of base station (BS) antennas.

\end{abstract}

\begin{IEEEkeywords}
Massive MIMO, jamming, spectral efficiency.
\end{IEEEkeywords}
\vspace{-7mm}
\blfootnote{The work of E. Bj\"{o}rnson was supported by Security Link.

H. Pirzadeh and S. M. Razavizadeh are with the School of Electrical Engineering, Iran University of Science and Technology (IUST), Tehran 1684613114, Iran (e-mail: h.pirzadeh.1990@ieee.org; smrazavi@iust.ac.ir).

E. Bj\"{o}rnson is with the Department of Electrical Engineering (ISY), Link\"{o}ping University, Link\"{o}ping, Sweden (e-mail: emil.bjornson@liu.se).}
\section{Introduction}\label{sec:Introduction}
\IEEEPARstart{J}{amming} constitutes a critical problem for reliability in wireless communications and imposes detrimental impact on the performance of wireless systems. In recent years, the performance of wireless systems in the presence of jamming has been studied extensively from a communication theoretic perspective. The problem of jammer design in a training-based point-to-point MIMO system and a multi-user system with a single antenna BS is analyzed in \cite{Zhou} and \cite{Pezeshki}, respectively. In addition, the design of a full-duplex eavesdropper (jammer) based on pilot contamination attack is considered in \cite{Maham}.

Recently, massive MIMO systems have attracted lots of attention. The ability to increase the spectral efficiency (SE) along with improving the energy efficiency have made the technology one of the main candidates for next generation wireless networks \cite{Ngo1}-\cite{Larsson}.
In spite of the great amount of research regarding massive MU-MIMO systems, there are still only a few works in this area that have adopted physical layer security (PLS) issues into their analyses.
However,
the revisiting of PLS seems to be necessary in massive MIMO systems.
It is recognized that massive MIMO brings new challenges and opportunities in this area which is unique and thoroughly different from conventional MIMO systems \cite{Ozan}, \cite{Kapetanovic1}. Secure transmission in the downlink of a massive MIMO system in the presence of an adversary which is capable of both jamming and eavesdropping is analysed in \cite{Ozan}, the problem of active and passive eavesdropping is studied in \cite{Kapetanovic1}, and secrecy enhancement in the downlink of a massive MIMO system in the presence of eavesdropper(s) is considered in \cite{Zhu}-\cite{Wang}.
Nevertheless, to the best of the authors' knowledge, no study has been done on analyzing the massive MU-MIMO system's performance (in terms of sum SE) in the presence of a jammer.
In this paper, we investigate the design of a smart jammer in the uplink of a single-cell massive MU-MIMO system and study the effect of the jamming on the performance of the system.
To design a smart jammer, by adopting classical bounding techniques, we derive a closed-form expression for the sum SE of the massive MU-MIMO systems in the presence of a smart jammer. Then, we find the optimal strategy that a jammer with a given energy budget should employ to induce the maximum loss to the sum SE of a legitimate massive MU-MIMO system.
Analytical and numerical analyses show that to what extent the abundance of antenna elements at the BS can increase the susceptibility of the massive MU-MIMO systems to the jamming attack. It is also shown that the advantage of optimal jamming over fixed power jamming boosts as the number of BS antennas goes large.

\emph{Notations}: We use boldface to denote matrices and vectors. $(.)^*$, $(.)^T$, $(.)^H$ and $(.)^\star$ denotes conjugate, transpose, conjugate transpose, and optimal value, respectively. $\boldsymbol{v}\sim\mathcal{CN}(\boldsymbol{0},\boldsymbol{\boldsymbol{R}})$ denotes circularly-symmetric complex Gaussian (CSCG) random vector with zero mean and covariance matrix $\mathit{\boldsymbol{R}}$. $\|.\|$ denotes Euclidean norm. $\mathit{\boldsymbol{I}_K}$ is the $K \times K$ identity matrix and expectation operator is denoted by $\mathbb{E}\{.\}$.

\section{System Model}\label{sec:SYSTEM MODEL}
Consider the uplink of a single-cell MU-MIMO system consisting of $K$ legitimate, single-antenna users (hereafter \emph{users}) that send their signals simultaneously to a BS equipped with $M$ antennas. Also there is a jammer which aims to reduce the sum SE of the legitimate system by carefully attacking the training and data transmission phases. Accordingly, the $M\times1$ received signal at the BS is
\begin{equation}\label{channel model}
  \mathit{\boldsymbol{y}}=\sqrt{p}\mathit{\boldsymbol{G}}\mathit{\boldsymbol{x}} + \mathit{\boldsymbol{n}}+\sqrt{q}\mathit{\boldsymbol{g}}_ws,
\end{equation}
where $p$ represents the average transmission power from each user, $\boldsymbol{G}=\boldsymbol{HD}^{\frac{1}{2}}$ is the channel matrix where $\boldsymbol{H}\in\mathbb{C}^{M\times K}$ models fast fading with each element, $h_{mk}$, distributed independently as $\mathcal{CN}\left(0,1\right)$
and $\boldsymbol{D}\in\mathbb{R}^{K\times K}$ is a constant diagonal matrix whose $k$th diagonal element, $\beta_k$, models the geometric attenuation and shadow-fading between the $k$th user and the BS.
$\mathit{\boldsymbol{x}}\in\mathbb{C}^{K\times1}$ is the symbol vector transmitted from users and is drawn from a CSCG codebook which satisfies $\mathbb{E}\{\mathit{\boldsymbol{x}}\mathit{\boldsymbol{x}}^{H}\}=\boldsymbol{I}_K$, and $\mathit{\boldsymbol{n}}\sim\mathcal{CN}\left(\mathbf{0},\boldsymbol{I}_M\right)$ denotes additive CSCG receiver noise at the BS.
In addition, $q$ represents the jammer's average power and $\mathit{\boldsymbol{g}}_w\sim\mathcal{CN}(\boldsymbol{0},\beta_w\boldsymbol{I}_M)$ is the channel vector between the jammer and the BS. Finally, $s$ denotes the jammer's symbol where $\mathbb{E}\left\{|s|^2\right\}=1$.

We consider a block-fading model where each channel remains constant in a coherence interval of length $\mathit{T}$ symbols and changes independently between different intervals. Note that $T$ is a fixed system parameter chosen as the minimum coherence duration of all users and is assumed to be smaller than that of the jammer.
At the beginning of each coherence interval, the users send their $\eta$-tuple mutually orthogonal pilot sequences ($K\leq\eta\leq T$) to the BS for channel estimation. The remaining $T-\eta$ symbols are dedicated to uplink data transmission. The average transmission powers of the users during training and data transmission phases are denoted by $p_t$ and $p_d$, respectively.

In order to analyze the worst-case impact of jamming,
we assume that the jammer is aware of the transmission protocol and can potentially use different powers for jamming the training and data transmission phases \cite{Zhou}, \cite{Karlsson}, which are denoted by $q_t$ and $q_d$, respectively.

\subsection{ Training\ Phase}\
The pilot sequences can be stacked into an $\eta\times K$ matrix $\sqrt{\eta p_t}\mathbf{\Phi}$, where the $k$th column of $\mathbf{\Phi}$, $\boldsymbol{\phi}_k$, is the $k$th user's pilot sequence and $\mathbf{\Phi}^H\mathbf{\Phi}=\boldsymbol{I}_K$\footnote{We assume that the legitimate system changes $\mathbf{\Phi}$ randomly in different coherence intervals and, hence, the jammer is unable to know the users' pilot sequences during training phase \cite{Ozan}.}. Therefore, the $M\times\eta$ received signal at the BS is
\begin{equation}\label{training matrix}
  \boldsymbol{Y}_t=\sqrt{\eta p_t}\boldsymbol{G}\mathbf{\Phi}^T+\boldsymbol{N}+\sqrt{\eta q_t}\mathit{\boldsymbol{g}}_w\boldsymbol{\phi}_w^{T},
\end{equation}
where $\boldsymbol{N}$ is an $M\times\eta$ matrix with i.i.d. $\mathcal{CN}(0,1)$ elements, and $\boldsymbol{\phi}_w$ is the jammer's pilot sequence\footnote{Since the jammer does not know the users' pilot sequences, it chooses a random pilot sequence uniformly distributed over the unit sphere, i.e., $\mathbb{E}\left\{\|\boldsymbol{\phi}_w\|^2\right\}=1$, to contaminate pilot sequences of the users \cite{Ngo1}. As a result $\mathbb{E}\left\{|\boldsymbol{\phi}_w^T\boldsymbol{\phi}_k^{*}|^2\right\}=1/\eta$.}.
The minimum mean squared error (MMSE) estimate \cite{Kay} of $\boldsymbol{G}$ given $\boldsymbol{Y}_t$ is
\begin{equation}\label{channel estimate}
  \hat{\boldsymbol{G}}=\frac{1}{\sqrt{\eta p_t}}\boldsymbol{Y}_t\mathbf{\Phi}^{*}{\left(\boldsymbol{I}_K+\frac{1+q_t\beta_w}{\eta p_t}\boldsymbol{D}^{-1}\right)^{-1}}.
\end{equation}
Define $\boldsymbol{\mathcal{E}}\triangleq\hat{\boldsymbol{G}}-\boldsymbol{G}$. Then we have
\begin{equation}\label{ghat variance}
  \sigma_{\hat{g}_k}^{2}=\frac{\eta p_t\beta_k^2}{\eta p_t\beta_k+q_t\beta_w+1}~\text{and}~ \sigma_{{\varepsilon}_k}^{2}=\frac{\left(1+q_t\beta_w\right)\beta_k}{\eta p_t\beta_k+q_t\beta_w+1}
\end{equation}
where $\sigma_{\hat{g}_k}^{2}$ and $\sigma_{{\varepsilon}_k}^{2}$ are the variances of the independent zero-mean elements in the $k$th column of $\hat{\boldsymbol{G}}$ and $\boldsymbol{\mathbf{\mathcal{E}}}$, respectively.
\subsection{ Data\ Transmission\ Phase}\
In this phase, the users send their data to the BS simultaneously while the jammer is sending its artificial noise signal. The BS selects a linear detection matrix ${\boldsymbol{A}}\in\mathbb{C}^{M\times K}$ as a function of the channel estimate $\hat{\boldsymbol{G}}$. Therefore, the resulted signal at the BS is \cite{Ngo1}
\begin{equation}\label{received signal}
  \mathit{\boldsymbol{r}}={\boldsymbol{A}}^H\left(\sqrt{p_d}{\mathbf{G}}\mathit{\boldsymbol{x}}+\mathit{\boldsymbol{n}}+\sqrt{q_d}\mathit{\boldsymbol{g}}_ws\right).
\end{equation}
The $k$th element of $\mathit{\boldsymbol{r}}$ is
\begin{multline}\label{kth user received signal}
  r_k=\sqrt{p_d}{\mathit{\boldsymbol{a}}}_{k}^{H}{\mathit{\boldsymbol{g}}}_{k}x_k\\
  +\sqrt{p_d}\sum_{i=1,i\ne k}^{K}{{\mathit{\boldsymbol{a}}}_k^H{\mathit{\boldsymbol{g}}}_{i}x_i}
  +{\mathit{\boldsymbol{a}}}_k^H\mathit{\boldsymbol{n}}+\sqrt{q_{d}}{\mathit{\boldsymbol{a}}}_k^H{\mathit{\boldsymbol{g}}}_ws,
\end{multline}
where ${\mathit{\boldsymbol{a}}}_k$ and ${\mathit{\boldsymbol{g}}}_k$ are the $k$th columns of ${\boldsymbol{A}}$ and ${\boldsymbol{G}}$, respectively. The BS treats ${\mathit{\boldsymbol{a}}}_{k}^{H}{\mathit{\boldsymbol{g}}}_{k}$ as the desired channel and the last three terms of (\ref{kth user received signal}) as worst-case Gaussian noise when decoding the signal\footnote{The jammer transmits Gaussian signal during data transmission phase, since it induces the worst-case interference in this phase.}.
Consequently, an ergodic achievable SE at the $k$th user is \cite{Emil Bj}
\begin{equation}\label{SE definition}
  \mathcal{S}_k=\mathcal{C}\left(\mathit{SINR}_k\right),
\end{equation}
where $\mathcal{C}\left(\gamma\right)\triangleq\left(1-\eta/T\right)\mathrm{log}_2\left(1+\gamma\right)$ and $\mathit{SINR}_k$ is the effective signal-to-interference-and-noise ratio at the $k$th user given by (\ref{uplink achievable rate})
\begin{figure*}[!t]
\begin{equation}\label{uplink achievable rate}
\mathit{SINR}_k=\frac{p_d|\mathbb{E}\left\{{\mathit{\boldsymbol{a}}}_{k}^{H}{\mathit{\boldsymbol{g}}}_k\right\}|^{2}}{p_d\sum_{i=1}^{K}{\mathbb{E}\left\{|{\mathit{\boldsymbol{a}}}_{k}^{H}{\mathit{\boldsymbol{g}}}_i|^{2}\right\}}
-p_d|\mathbb{E}\left\{{\mathit{\boldsymbol{a}}}_{k}^{H}{\mathit{\boldsymbol{g}}}_k\right\}|^{2}+\mathbb{E}\left\{\|{\mathit{\boldsymbol{a}}}_{k}\|^2\right\}
+q_{d}\mathbb{E}\left\{|\mathit{\boldsymbol{a}}_{k}^{H}{\mathit{\boldsymbol{g}}}_w|^2\right\}}
\end{equation}
\hrulefill
    \vspace*{4pt}
    \end{figure*}
shown at the top of the next page \cite{Emil Bj}.
\subsection{ Sum\ Spectral\ Efficiency}\
\color{black}In our analyses, we choose the sum SE (in bit/s/Hz) as our objective function which is defined as $\mathcal{S}\triangleq\sum_{k=1}^{K}{\mathcal{S}_k}$ \cite{Emil Bj}.
By using (\ref{ghat variance}) and (\ref{uplink achievable rate}) and assuming maximum ratio combining (MRC) at the BS\footnote{Note that the main results do not rely on the assumption of MRC detector and similar ones occur for zero-forcing (ZF) and MMSE detectors.} (i.e., $\boldsymbol{A}=\hat{\boldsymbol{G}}$) \cite{Ngo1}, a closed-form expression for the sum SE in the presence of a smart jammer can be derived as (\ref{approx MRC}) at the top of the next page.
\begin{figure*}[!t]
\begin{equation}\label{approx MRC}
\mathcal{S}=\sum_{k=1}^{K}{\mathcal{C}\left(\frac{M\eta p_t\beta_k^2}{\left(\eta p_t\beta_k+ q_{t}\beta_w+1\right)\left(\sum_{i=1}^{K}{\beta_i}+\frac{1}{p_d}\right)+
\eta p_t\beta_k^2+\frac{q_{d}}{p_d}\left(\left(M+2\right) q_{t}\beta_w+\eta p_t\beta_k+1\right)\beta_w}\right)}
\end{equation}
\hrulefill
    \vspace*{4pt}
    \end{figure*}
From equation (\ref{approx MRC}) it is apparent that, as the number of BS antennas $M$ goes to infinity, jamming saturates the performance of the legitimate system due to pilot contamination. More precisely
\begin{equation}\label{saturate}
  \mathcal{S}~{\xrightarrow[M\to\infty]{}}~\sum_{k=1}^{K}{\mathcal{C}\left(\eta\frac{p_t}{q_{t}}\frac{p_d}{q_{d}}\left(\frac{\beta_k}{\beta_w}\right)^2\right)}.
\end{equation}
In Section \ref{sec:Simulation}, we show that optimal jamming can accelerate the pace of this saturation notably.

Furthermore, we consider energy constrained transmission in each coherence block for both users and the jammer. This energy constraint could be interpreted as a constraint on the power budget during the coherence interval \cite{Zhou}. The power budget of each user and the jammer are denoted by $\mathcal{P}$ and $\mathcal{Q}$, respectively. Hence, for each user we have
\begin{equation}\label{user budget}
  \eta p_t+(T-\eta)p_d=\mathcal{P}T,
\end{equation}
and for the jammer we have
\begin{equation}\label{jammer budget}
  \eta q_{t}+(T-\eta)q_{d}=\mathcal{Q}T.
\end{equation}
We denote the fraction of the total energy that each user and the jammer allocate to the training phase by $\varphi$ and $\zeta$, respectively \cite{Zhou}. Accordingly, for each user we have
\begin{equation}\label{phi}
  p_t=\frac{\varphi\mathcal{P}T}{\eta}~\text{and}~p_d=\frac{(1-\varphi)\mathcal{P}T}{T-\eta},
\end{equation}
and for the jammer we have
\begin{equation}\label{zeta}
  q_{t}=\frac{\zeta \mathcal{Q}T}{\eta}~\text{and}~q_{d}=\frac{(1-\zeta)\mathcal{Q}T}{T-\eta}.
\end{equation}

In the next section, we derive the optimal value of $\zeta$ that minimizes the sum SE.
\section{Optimal Resource Allocation}\label{sec:Optimal Power Allocation}
In this section, we show how a smart jammer with a given energy budget should attack the training and data transmission phases to subvert the performance of a massive MU-MIMO system. Since the users have a fixed strategy during uplink transmission, we assume that the smart jammer can acquire the value of $\varphi$ and $\mathcal{P}$ to facilitate its design and impose the maximum loss to the sum SE of the legitimate system \cite{Zhou}, \cite{Pezeshki}. \color{black}Thus, the optimization problem for deriving the optimal value of $\zeta$ is
\begin{equation} \label{optimization jammer1}
{\mathfrak{P}} : \begin{dcases*}
\begin{aligned}
& \underset{\zeta}{\text{minimize}}
&& \mathcal{S} \\
& \text{subject to} & & 0\le\zeta\le 1. \\
\end{aligned}
\end{dcases*}
\end{equation}
The next proposition helps us to solve this optimization problem efficiently.
\begin{pro}
${\mathfrak{P}}$ is a convex optimization problem.
\end{pro}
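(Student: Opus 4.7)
\emph{Proof plan.} The plan is to show that the objective $\mathcal{S}$, viewed as a function of $\zeta$ through (\ref{zeta}) and (\ref{approx MRC}), is convex on $[0,1]$. Since the feasible interval is itself convex, this is enough to conclude convexity of $\mathfrak{P}$. Moreover, because a sum of convex functions is convex, it suffices to establish convexity of each user term $\mathcal{S}_k$ separately.

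First I would fix a user index $k$ and isolate the denominator of the $k$th SINR in (\ref{approx MRC}), calling it $g_k(\zeta)$. Inspecting (\ref{approx MRC}), this denominator is affine in $(q_t,q_d)$ except for a single bilinear cross term, proportional to $(M+2)q_tq_d\beta_w^2/p_d$. Substituting $q_t = \zeta\mathcal{Q}T/\eta$ and $q_d = (1-\zeta)\mathcal{Q}T/(T-\eta)$ from (\ref{zeta}), the cross term becomes $\zeta(1-\zeta)$ times a positive constant, so $g_k$ is a quadratic in $\zeta$ whose leading coefficient $-(M+2)\beta_w^2\mathcal{Q}^2 T^2/[p_d\,\eta(T-\eta)]$ is strictly negative. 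Hence $g_k$ is concave in $\zeta$, and is clearly strictly positive on $[0,1]$.

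Since the numerator $N_k \triangleq M\eta p_t\beta_k^2$ in (\ref{approx MRC}) does not depend on $\zeta$, we may write
\[
\mathcal{S}_k=\left(1-\tfrac{\eta}{T}\right)\log_2\!\left(1+\frac{N_k}{g_k(\zeta)}\right).
\]
A one-line second-derivative check shows that the scalar map $x\mapsto\log_2(1+N_k/x)$ is convex and strictly decreasing on $(0,\infty)$. Applying the standard composition rule, namely that a convex, non-increasing outer function composed with a concave inner function is convex, then gives convexity of $\mathcal{S}_k$ in $\zeta$. Summing over $k$ and multiplying by the positive prefactor $(1-\eta/T)$ yields convexity of $\mathcal{S}$.

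The main subtlety is in the last step: the naive algebraic split $\log(g_k+N_k)-\log(g_k)$ expresses $\mathcal{S}_k$ as a difference of two concave functions, and convexity is not visible from that form. The trick is to keep the outer function intact as $\log(1+N_k/\cdot)$ and verify that this combined outer map is convex and non-increasing, at which point the composition with the concave $g_k$ closes the argument. Everything else is just the observation that the only source of nonlinearity in the denominator is the bilinear $q_tq_d$ term, which produces the concave $\zeta(1-\zeta)$ after substitution.
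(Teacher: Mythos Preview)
Your argument is correct and mirrors the paper's proof almost exactly: both substitute (\ref{zeta}) into (\ref{approx MRC}), verify that each SINR denominator is a concave quadratic in $\zeta$ (via the negative $\zeta(1-\zeta)$ cross term), and then invoke the composition rule with the convex, non-increasing outer map $x\mapsto\log_2(1+c/x)$ before summing over $k$. The only cosmetic difference is that the paper normalizes the denominator by $N_k$ to work with $f_k=g_k/N_k$ and the outer function $\log_2(1+1/x)$, whereas you keep $N_k$ in the outer function; the content is identical.
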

\begin{proof}
By substituting (\ref{zeta}) into (\ref{approx MRC}), ${\mathfrak{P}}$ can be written as
\begin{equation} \label{optimization jammer2}
{\mathfrak{P}} : \begin{dcases*}
\begin{aligned}
& \underset{\zeta}{\text{minimize}}
&& \left(1-\frac{\eta}{T}\right)\sum_{k=1}^{K}{\mathrm{log}_2\left(1+\frac{1}{f_k\left(\zeta\right)}\right)} \\
& \text{subject to} & & 0\le\zeta\le 1. \\
\end{aligned}
\end{dcases*}
\end{equation}
In this formulation, $f_k\left(\zeta\right)\triangleq\frac{\alpha\left(\zeta\right)}{M\eta p_t\beta_k^2}$ where
\begin{multline}\label{gk}
\alpha\left(\zeta\right)\triangleq \left(\eta p_t\beta_k+\frac{\zeta\beta_w\mathcal{Q}T}{\eta}+1\right)\left(\sum_{i=1}^{K}{\beta_i}+\frac{1}{p_d}\right)\\
+\frac{\left(1-\zeta\right)\beta_w\mathcal{Q}T}{\left(T-\eta\right)p_d}\left((M+2)\frac{\zeta\beta_w\mathcal{Q}T}{\eta}+\eta p_t\beta_k+1\right)+\eta p_t\beta_k^2.
\end{multline}
The second derivative of $f_k\left(\zeta\right)$ is equal to
\begin{equation}\label{second derivative g}
  \frac{\partial^2f_k\left(\zeta\right)}{\partial\zeta^2}=-\frac{2(M+2)\beta_w^2\mathcal{Q}^2T^2}{ M\eta^2 p_tp_d\beta_k^2\left(T-\eta\right)}< 0.
\end{equation}
Hence, $f_k\left(\zeta\right)$ is a concave function. Since $\mathrm{log}_2\left(1+\frac{1}{x}\right)$ is convex and non-increasing function, and from the concavity of $f_k\left(\zeta\right)$, we conclude that $\mathrm{log}_2(1+\frac{1}{f_k\left(\zeta\right)})$ is a convex function of $\zeta$ \cite{Boyd}. Since also the summation of convex functions is convex, the proof is complete.
\end{proof}
As a result, we can find the optimal jammer energy allocation ratio $\zeta^{\star}$ by any convex optimization tool. In Section \ref{sec:Simulation} we evaluate $\zeta^{\star}$ numerically for different values of the jammer's power budget and the number of BS antennas.

To get an insight into the impact of the number of BS antennas on $\zeta^{\star}$, we can obtain a closed-form solution for $\mathfrak{P}$ in the special case of $\boldsymbol{D}=\beta\boldsymbol{I}_K$.
Using Lagrangian multiplier method and Karush-Kuhn-Tucker (KKT) conditions \cite{Boyd}, the optimal energy allocation ratio can be derived analytically as
\begin{equation}\label{zeta star}
  \zeta^{\star}=\begin{dcases*}
\begin{aligned}
& 0,  &&    \mathcal{Q}T<-\kappa, \\
& 1,  &&    \mathcal{Q}T<\kappa, \\
& \frac{\kappa+\mathcal{Q}T}{2\mathcal{Q}T}, && \text{otherwise},
\end{aligned}
\end{dcases*}
\end{equation}
where
\[
\kappa=\frac{\left(K\beta\left(1-\varphi\right)\mathcal{P}T+T-\eta\right)
-\eta\left(\beta\varphi\mathcal{P}T+1\right)}{\beta_w\left(M+2\right)}.
\]
This analytical expression demonstrates that the jammer's optimal strategy is dependent on the number of BS antennas.
For instance, as the number of BS antennas grows, $\kappa$ becomes smaller. Consequently, the optimal jamming strategy, even for a low power jammer, falls into the third case, i.e., attacking both phases.
Specifically, as the number of BS antennas goes to infinity, $\kappa$ tends to zero and the optimal energy allocation ratio tends to $\zeta^{\star}=1/2$.
This result seems logical since by attacking both phases, the jammer can amplify its adverse impact on the system's performance proportional to $M$ by the aid of pilot contamination.
It can also be shown from (\ref{zeta star}) that $\zeta^{\star}$ is a continuous and non-increasing function of the users' energy allocation ratio $\varphi$. Hence, the more energy the users devote to the training phase, the more energy the jammer should employ to jam the data transmission phase.
\section{Numerical Results}\label{sec:Simulation}
We consider a cell with radius $r_c=1000~\text{m}$ in which $K=10$ users are uniformly distributed and no user is closer than $r_h=200~\text{m}$ to the BS. We set $\beta_{k}={z_{k}}/{(r_{k}/r_h)^{\nu}}$, where $z_{k}$ is a log-normal random variable with standard deviation $\sigma_{sh}=8~\text{dB}$ that models shadow-fading, $r_{k}$ is the distance between the BS and the $k$th user and $\nu=3.8$ is the decay exponent. Also we set $T=200$. It is assumed that the users have optimized their energy allocation ratio $\varphi$ and training duration $\eta$ to maximize the sum SE as described in \cite{Ngo2}, \cite{babak}.
Note that owing to the normalization of the noise variance to one, $\mathcal{P}$ and $\mathcal{Q}$ are ``normalized'' power and, therefore,  dimensionless. Accordingly, they are measured in dB in the numerical evaluations.

In Fig. \ref{fig1} the smart jamming benefit for different values of the jammer's power budget is depicted. We compare the optimal jamming (i.e., $\zeta^{\star}$) to the equal jamming (i.e., $\zeta=\eta/T$) for a BS with $M=100$ antennas. The sum SE reduction thanks to optimal energy allocation is significant which approves the potential of smart jamming in the massive MIMO systems.

\begin{figure}
\centering
\includegraphics[width=0.3\textwidth]
{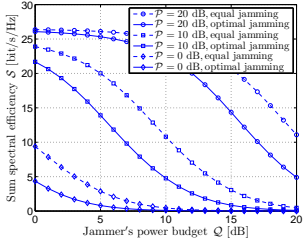}
\caption{Sum spectral efficiency $\mathcal{S}$ versus the jammer's power budget $\mathcal{Q}$ for a BS with $M=100$ antennas. The users' power budget $\mathcal{P}$ is set to 0, 10, and 20 dB. The sum spectral efficiency is depicted when the jammer performs optimal jamming ($\zeta^{\star}$) and equal jamming ($\zeta=\eta/T$).}
\label{fig1}
\end{figure}

Fig. \ref{fig2} illustrates the advantage of smart jamming in the large antenna limit. In our experiment, we set the jammer's power budget to $\mathcal{Q}=10$ dB and the legitimate users' power budget $\mathcal{P}$ to 5, 10, and 15 dB. The divergence between the equal and optimal jamming curves states that the more antennas the BS is equipped, the more harm could be induced to the sum SE by optimal jamming compared to equal jamming. We could also conjecture this phenomenon intuitively. In fact, as the number of BS antennas increases, the adverse impact of the jammer on the sum SE magnifies due to the pilot contamination phenomenon. As a result, the jammer plays the role of a high power jammer which is a scenario that optimal jamming outperforms equal jamming significantly. Moreover, It is evident that optimal jamming saturates the sum SE faster than equal jamming as the number of BS antennas goes large.

\begin{figure}
\centering
\includegraphics[width=0.3\textwidth]
{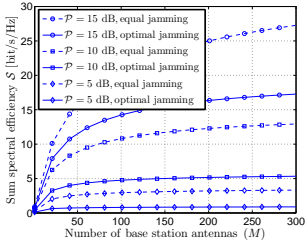}
\caption{Sum spectral efficiency $\mathcal{S}$ versus the number of BS antennas $M$. The jammer's power budget $\mathcal{Q}$ is set to be 10 dB, the users's power budget $\mathcal{P}$ is set to 5, 10, and 15 dB.}
\label{fig2}
\end{figure}

Fig. \ref{fig3} shows the impact of the number of BS antennas $M$ and the users' energy allocation ratio $\varphi$ on the jammer's optimal energy allocation ratio $\zeta^{\star}$.  It demonstrates that, when the number of BS antennas is small, the jammer's optimal energy allocation ratio decreases as the users' energy allocation ratio increases.
For instance, when $M=10$ and $\varphi=0.1$, the jammer should allocate more energy for the jamming of the training phase. The reverse holds true for $\varphi=0.9$, i.e., the jammer should allocate more energy to jam of the data transmission phase. Apart from the dependence of $\zeta^{\star}$ on $\varphi$, the optimal strategy is equal energy allocation (i.e., $\zeta^{\star}=0.5$) as the number of BS antennas goes large. These observations are consistent with the analytical results in Section \ref{sec:Optimal Power Allocation}.

\begin{figure}
\centering
\includegraphics[width=0.3\textwidth]
{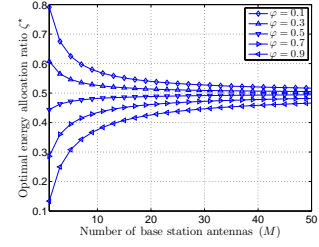}
\caption{Optimal energy allocation ratio $\zeta^{\star}$ versus the number of BS antennas $M$ for different values of $\varphi$. The jammer's power budget $\mathcal{Q}$ and the users's power budget $\mathcal{P}$ is set to be 10 dB.}
\label{fig3}
\end{figure}
\section{Conclusion}\label{sec:concolusion}
In this work, we considered the problem of smart jamming in the uplink of a massive MU-MIMO system.
We showed that if a jammer causes pilot contamination during training phase and optimally allocates its power budget to jam the training and data transmission phases,
it can impose dramatic harm to the sum SE of the legitimate system. Analytical results showed that the optimal strategy of a smart jammer is highly dependent on the number of BS antennas. In particular, when the BS is equipped with large number of antenna elements, even a low power jammer can achieve a large gain by optimal energy allocation over fixed power jamming.

\end{document}